\renewcommand{\geq}{\geqslant}
\renewcommand{\leq}{\leqslant}
\renewcommand{\ge}{\geq}
\renewcommand{\le}{\leq}
\newcommand{\union}{\mathbin{\cup}}
\newcommand{\abs}[1] {\ensuremath\left|#1\right|}
\newcommand{\ceil}[1] {\ensuremath\left\lceil#1\right\rceil}
\newcommand{\set}[2] {\ensuremath{\left\{#1 \mid #2\right\}}}
\newcommand{\os}[1] {\ensuremath{\left\{#1\right\}}}
\newcommand{\N}{\mathbb{N}}
\newcommand{\bigO}{\mathcal{O}}
\newcommand{\FOLL}{\ensuremath{\mathsf{FOLL}}}
\newcommand{\AC}{\ensuremath{\mathsf{AC}}}
\newcommand{\ACC}{\ensuremath{\mathsf{ACC}}}
\newcommand{\qAC}{\ensuremath{\mathsf{qAC}}}
\newcommand{\TC}{\ensuremath{\mathsf{TC}}}
\newcommand{\NC}{\ensuremath{\mathsf{NC}}}
\newcommand{\PTIME}{\ensuremath{\mathsf{P}}}
\newcommand{\LOGSPACE}{\ensuremath{\mathsf{L}}}
\newcommand{\NL}{\ensuremath{\mathsf{NL}}}
\newcommand{\Ppoly}{\ensuremath{\mathsf{P / poly}}}
\newcommand{\vV}{\mathbf{V}}
\newcommand{\vW}{\mathbf{W}}
\newcommand{\vAb}{\mathbf{Ab}}
\newcommand{\vNil}{\mathbf{N}}
\newcommand{\vCom}{\mathbf{Com}}
\newcommand{\vG}{\mathbf{G}}
\newcommand{\CSM}{\ensuremath{\textsc{CSM}}}
\newcommand{\Parity}{\ensuremath{\textsc{Parity}}}
\newcommand{\STCONN}{\ensuremath{\textsc{STConn}}}
\newcommand{\C}{\ensuremath{\mathcal{C}}}
\newcommand{\SCGM}{Cayley groupoid membership}
\newcommand{\SCSM}{Cayley semigroup membership}
\newcommand{\CCSM}{Cayley Semigroup Membership}
\newcommand{\SCGMP}{{\SCGM} problem}
\newcommand{\SCSMP}{{\SCSM} problem}
\newcommand{\CCSMP}{{\CCSM} Problem}
\newcommand{\SLPBP}{logarithmic power basis property}
\newcommand{\SPLCP}{poly-logarithmic circuits property}
\newcommand{\CPLCP}{Poly-Logarithmic Circuits Property}
\newcommand{\ie}{i.e.,~}
\newcommand{\ms}{\hspace*{0.5pt}}
\theoremstyle{plain}
\newtheorem{theorem}{Theorem}
\newtheorem{lemma}[theorem]{Lemma}
\newtheorem{corollary}[theorem]{Corollary}
\newtheorem{proposition}[theorem]{Proposition}
\newtheorem{example}{Example}
\newcommand{\thickhline}{%
    \noalign {\ifnum 0=`}\fi \hrule height 1pt
    \futurelet \reserved@a \@xhline
}
\title{On the Complexity of the {\CCSMP}}
\author{Lukas Fleischer}
\date{FMI, University of Stuttgart\thanks{This work was supported by the DFG grant DI 435/5--2.}\\
  Universitätsstraße 38, 70569 Stuttgart, Germany\\
  \texttt{fleischer@fmi.uni-stuttgart.de}}
\begin{document}

\maketitle

\begin{abstract}
  We investigate the complexity of deciding, given a multiplication table
  representing a semigroup~$S$, a subset $X$ of $S$ and an element $t$ of $S$,
  whether $t$ can be expressed as a product of elements of $X$.
  It is well-known that this problem is $\NL$-complete and that the more
  general \emph{\SCGMP}, where the multiplication table is not required to be
  associative, is $\PTIME$-complete.
  For groups, the problem can be solved in deterministic log-space which raised
  the question of determining the exact complexity of this variant.
  Barrington, Kadau, Lange and McKenzie showed that for Abelian groups and for
  certain solvable groups, the problem is contained in the complexity class
  $\FOLL$ and they concluded that these variants are not hard for any
  complexity class containing $\Parity$. The more general case of arbitrary
  groups remained open.
  In this work, we show that for both groups and for commutative semigroups,
  the problem is solvable in $\qAC^0$ (quasi-polynomial size circuits of
  constant depth with unbounded fan-in) and conclude that these variants are
  also not hard for any class containing $\Parity$.
  Moreover, we prove that $\NL$-completeness already holds for the classes of
  $0$-simple semigroups and nilpotent semigroups.
  Together with our results on groups and commutative semigroups, we prove the
  existence of a natural class of finite semigroups which generates a variety
  of finite semigroups with $\NL$-complete {\SCSM}, while the {\SCSMP} for the
  class itself is not $\NL$-hard.
  We also discuss applications of our technique to $\FOLL$.
\end{abstract}

\section{Introduction}

The \emph{\SCGMP} (sometimes also called the \emph{generation problem}) asks,
given a multiplication table representing a groupoid $G$, a subset $X$ of $G$
and an element $t$ of $G$, whether $t$ can be expressed as a product of
elements of $X$.
In 1976, Jones and Laaser showed that this problem is
$\PTIME$-complete~\cite{JonesL76}.
Barrington and McKenzie later studied natural subproblems and connected them to
standard subclasses of $\PTIME$~\cite{BarringtonM91}.

When restricting the set of valid inputs to inputs with an associative
multiplication table, the problem becomes $\NL$-complete~\cite{JonesLL76}.
We will call this variant of the problem the \emph{\SCSMP} and analyze its
complexity when further restricting the semigroups encoded by the input.
For a class of finite semigroups $\vV$, the \emph{{\SCSMP} for $\vV$} is
formally defined as follows.
\vspace{1em}

\noindent
\begin{tabularx}{\textwidth}{p{1.15cm}X}
  \thickhline
  $\CSM(\vV)$ \\
  \hline
  \textsf{Input}: & The Cayley table of a semigroup $S \in \vV$, a set $X \subseteq S$ and an element $t \in S$ \\
  \textsf{Question}: & Is $t$ in the subsemigroup of $S$ generated by $X$? \\
  \hline
\end{tabularx}
\vspace{1ex}

The motivation for investigating this problem is two-fold.
Firstly, there is a direct connection between the {\SCSMP} and decision
problems for regular languages: a language $L \subseteq \Sigma^+$ is regular if
and only if there exist a finite semigroup~$S$, a morphism $\varphi \colon
\Sigma^+ \to S$ and a set $P \subseteq S$ such that $L = \varphi^{-1}(P)$.
Thus, morphisms to finite semigroups can be seen as a way of encoding regular
languages. For encoding such a semigroup, specifying the multiplication table
is a natural choice.
Deciding emptiness of a regular language represented by a morphism $\varphi
\colon \Sigma^+ \to S$ to a finite semigroup $S$ and a set $P \subseteq S$
boils down to checking whether any of the elements from the set $P$ is
contained in the subsemigroup of $S$ generated by the images of the letters of
$\Sigma$ under $\varphi$.
Conversely, the {\SCSMP} is a special case of the emptiness problem for regular
languages: an element $t \in S$ is contained in the subsemigroup generated by a
set $X \subseteq S$ if and only if the language $\varphi^{-1}(P)$ with $\varphi
\colon X^+ \to S, x \mapsto x$ and $P = \os{t}$ is non-empty.

Secondly, we hope to get a better understanding of the connection between
algebra and low-level complexity classes included in $\NL$ in a fashion similar
to the results of~\cite{BarringtonM91}.
In the past, several intriguing links between so-called \emph{varieties
of finite semigroups} and the computational complexity of algebraic problems
for such varieties were made.
For example, the fixed membership problem for a regular language was shown to
be in $\AC^0$ if its syntactic monoid is aperiodic, in $\ACC^0$ if the
syntactic monoid is solvable and $\NC^1$-complete
otherwise~\cite{Barrington86,bt88jacm}.
It is remarkable that in most results of this type, both the involved
complexity classes and the algebraic varieties are natural.
On a language-theoretical level, varieties of finite semigroups correspond to
subclasses of the regular languages closed under Boolean operations, quotients
and inverse morphisms.

\subparagraph{Related Work.}

We already mentioned the work of Jones and Laaser on the
{\SCGMP}~\cite{JonesL76}, the work of Jones, Lien and Laaser on the
{\SCSMP}~\cite{JonesLL76} and the work of Barrington and McKenzie on
subproblems thereof~\cite{BarringtonM91}.
The semigroup membership problem and its restrictions to varieties of finite
semigroups was also studied for other encodings of the input, such as matrix
semigroups~\cite{Babai85,BabaiS84,BabaiBIL96} or transformation
semigroups~\cite{Sims1967,FurstHopcroftLuks80,BabaiLS87,Beaudry88,BeaudryMT92,Beaudry94,Beaudry88thesis}.

The group version of the {\SCSMP} ($\CSM(\vG)$, using our notation) was first
investigated by Barrington and McKenzie in 1991~\cite{BarringtonM91}.
They observed that the problem is in symmetric log-space, which has been shown
to be the same as deterministic log-space by Reingold in
2008~\cite{Reingold08}, and suggested it might be complete for deterministic
log-space. However, all attempts to obtain a hardness proof failed (in fact,
their conjecture is shown to be false in this work).
There was no progress in a long time until Barrington, Kadau, Lange and
McKenzie showed that for Abelian groups and certain solvable groups, the
problem lies in the complexity class $\FOLL$ and thus, cannot be hard for any
complexity class containing $\Parity$ in 2001~\cite{BarringtonKLM01}. The case
of arbitrary groups remained open.

\subparagraph{Our Contributions.}

We generalize previous results on Abelian groups to arbitrary commutative
semigroups.
Then, using novel techniques, we show that the {\SCSMP} for the variety of
finite groups $\vG$ is contained in $\qAC^0$ and thus, cannot be hard for any
class containing $\Parity$.
Our approach relies on the existence of succinct representations of group
elements by algebraic circuits.
More precisely, it uses the fact that every element of a group $G$ can be
computed by an algebraic circuit of size $\mathcal{O}(\log^3 \abs{G})$ over any
set of generators.
Since in the {\SCSMP}, the algebraic structure is not fixed, we introduce
so-called Cayley circuits, which are similar to regular algebraic circuits but
expect the finite semigroup to be given as part of the input. We prove that
these Cayley circuits can be simulated by sufficiently small unbounded fan-in
Boolean circuits. We then use this kind of simulation to evaluate all Cayley
circuits, up to a certain size, in parallel.

By means of a closer analysis and an extension of the technique used by Jones,
Lien and Laaser in~\cite{JonesLL76}, we also show that the {\SCSMP} remains
$\NL$-complete when restricting the input to $0$-simple semigroups or to
nilpotent semigroups.

Combining our results, we obtain that the {\SCSMP} for the class $\vG \union
\vCom$, which consists of all finite groups and all finite commutative
semigroups, is decidable in $\qAC^0$ (and thus not $\NL$-hard) while the
{\SCSMP} for the minimal variety of finite semigroups containing $\vG \union
\vCom$ is $\NL$-complete.

Finally, we discuss the extent to which our approach can be used to establish
membership of {\SCSM} variants to the complexity class $\FOLL$.
Here, instead of simulating all circuits in parallel, we use an idea based on
repeated squaring. This technique generalizes some of the main concepts used
in~\cite{BarringtonKLM01}.

\section{Preliminaries}
\label{sec:prelim}

\subparagraph{Algebra.}

A semigroup $T$ is a \emph{subsemigroup} of $S$ if $T$ is a subset of $S$
closed under multiplication.
The \emph{direct product} of two semigroups $S$ and $T$ is the Cartesian
product $S \times T$ equipped with componentwise multiplication.
A subsemigroup of a direct product is also called \emph{subdirect product}.
A semigroup $T$ is a \emph{quotient} of a semigroup $S$ if there exists a
surjective morphism $\varphi \colon S \to T$.

A \emph{variety of finite semigroups} is a class of finite semigroups which is
closed under finite subdirect products and under quotients.
Since we are only interested in finite semigroups, we will henceforth use the
term \emph{variety} for a variety of finite semigroups.
Note that in the literature, such classes of semigroups are often called
\emph{pseudovarieties}, as opposed to Birkhoff varieties which are also closed
under infinite subdirect products.
The following varieties play an important role in this paper:
\begin{itemize}
  \item $\vG$, the class of all finite groups,
  \item $\vAb$, the class of all finite Abelian groups,
  \item $\vCom$, the class of all finite commutative semigroups,
  \item $\vNil$, the class of all finite nilpotent semigroups, \ie{}semigroups where the only idempotent is a zero element.
\end{itemize}
The \emph{join} of two varieties $\vV$ and $\vW$, denoted by $\vV \lor \vW$, is
the smallest variety containing both $\vV$ and $\vW$.
A semigroup $S$ is \emph{$0$-simple} if it contains a zero element $0$ and if
for each $s \in S \setminus \os{0}$, one has $SsS = S$.
The class of finite $0$-simple semigroups does not form a variety.

\subparagraph{Complexity.}

We assume familiarity with standard definitions from circuit complexity.
A function has \emph{quasi-polynomial} growth if it is contained in
$2^{\bigO(\log^c n)}$ for some fixed $c \in \N$.
Throughout the paper, we consider the following unbounded fan-in Boolean
circuit families:
\begin{itemize}
  \item $\AC^0$, languages decidable by circuit families of depth $\bigO(1)$
    and polynomial size,
  \item $\qAC^0$, languages decidable by circuit families of depth $\bigO(1)$
    and quasi-polynomial size,
  \item $\FOLL$, languages decidable by circuit families of depth
    $\bigO(\log \log n)$ and polynomial size,
  \item $\AC^1$, languages decidable by circuit families of depth
    $\bigO(\log n)$ and polynomial size,
  \item $\Ppoly$, languages decidable by circuit families of polynomial size
    (and unbounded depth).
\end{itemize}

We allow NOT gates but do not count them when measuring the depth or the size
of a circuit.
We will also briefly refer to the complexity classes $\ACC^0$, $\TC^0$,
$\NC^1$, $\LOGSPACE$ and $\NL$.

It is known that the \Parity~function cannot be computed by $\AC^0$, $\FOLL$ or
$\qAC^0$ circuits. This follows directly from H{\aa}stad's and Yao's famous
lower bound results~\cite{Hastad86,Yao85}, which state that the number of
Boolean gates required for a depth-$d$ circuit to compute $\Parity$ is
exponential in $n^{1/(d-1)}$.

\section{Hardness Results}

Before looking at parallel algorithms for the {\SCSMP}, we establish two new
$\NL$-hardness results.
To this end, we first analyze the construction already used by Jones, Lien and
Laaser~\cite{JonesLL76}.
It turns out that the semigroups used in their reductions are $0$-simple which
leads to the following result.

\begin{theorem}
  For a class containing all $0$-simple semigroups, the {\SCSMP} is
  $\NL$-complete.
  \label{thm:0-simple}
\end{theorem}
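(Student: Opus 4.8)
The plan is to establish the two directions of $\NL$-completeness separately, with the hardness direction carrying essentially all of the work. Containment in $\NL$ is immediate: restricting the admissible inputs to semigroups drawn from a fixed subclass can only shrink the set of valid instances, and since the unrestricted {\SCSMP} for arbitrary semigroups is already in $\NL$~\cite{JonesLL76}, so is its restriction to any class, including a class containing all $0$-simple semigroups.

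For hardness I would give a reduction from directed reachability $\STCONN$, the canonical $\NL$-complete problem, and arrange that every semigroup produced by the reduction is $0$-simple, so that it lies in any class containing all $0$-simple semigroups. Given a directed graph on vertex set $\os{1, \dots, n}$ with source $s$ and sink $t$, I would build the semigroup $S$ of matrix units $e_{ij}$ for $1 \le i, j \le n$ together with an absorbing zero $0$, with multiplication $e_{ij} e_{kl} = e_{il}$ when $j = k$ and $e_{ij} e_{kl} = 0$ otherwise. The generating set is $X = \set{e_{ij}}{(i,j) \text{ is an edge}}$ and the target element is $e_{st}$. A product $e_{i_0 i_1} e_{i_1 i_2} \cdots e_{i_{m-1} i_m}$ collapses to $e_{i_0 i_m}$ exactly when consecutive indices match, that is, exactly when $i_0 \to i_1 \to \cdots \to i_m$ is a walk in the graph, and evaluates to $0$ otherwise; hence $e_{st}$ lies in the subsemigroup generated by $X$ if and only if there is a walk, equivalently a path, from $s$ to $t$. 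Assuming $s \neq t$ without loss of generality disposes of the trivial diagonal case.

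Two facts then need to be checked. First, $S$ is $0$-simple: multiplication is associative (a short case analysis on whether the inner indices match), $0$ is a zero, and for any nonzero $e_{ij}$ one has $e_{ki} \cdot e_{ij} \cdot e_{jl} = e_{kl}$ for all $k, l$ together with $0 \in S e_{ij} S$, so that $S e_{ij} S = S$; equivalently, $S$ is the Rees matrix semigroup over the trivial group with an $n \times n$ identity sandwich matrix, which is completely $0$-simple. Second, the reduction must be computable within a resource bound weak enough that $\NL$-hardness is meaningful: each entry of the Cayley table of $S$ is a purely local function of the four indices (compare $j$ with $k$, then output the pair $(i,l)$ or $0$), membership of a pair in $X$ reads a single bit of the adjacency matrix, and the target is read off directly, so the whole instance is produced by a first-order, hence logarithmic-space, reduction.

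The main obstacle is not any single step but the bookkeeping required to make the reduction uniform and to confirm $0$-simplicity cleanly from the definition given in the preliminaries; in particular, one must verify that $S$ genuinely satisfies $S e_{ij} S = S$ for every nonzero element, including recovering $0$ inside each such ideal, and that the chosen encoding of $S$, $X$ and $t$ is computable by circuits so weak that the reduction does not itself introduce $\NL$-hardness artificially. Once these routine verifications are in place, combining the reduction with the membership argument yields $\NL$-completeness for every class containing all $0$-simple semigroups.
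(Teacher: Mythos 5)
Your proposal is correct and follows essentially the same route as the paper: a reduction from $\STCONN$ to the semigroup of matrix units $V \times V \union \os{0}$ (your $e_{ij}$ notation), with the identical argument $(x,v)(v,w)(w,y) = (x,y)$ establishing $0$-simplicity. The only cosmetic difference is that the paper adds the diagonal elements $\set{(v,v)}{v \in V}$ to the generating set so that the case $s = t$ is handled uniformly, whereas you dispose of it by a (valid) WLOG assumption in the reduction.
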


\begin{proof}
  To keep the proof self-contained, we briefly describe the reduction from
  the connectivity problem for directed graphs (henceforth called $\STCONN$) to
  the {\SCSMP} given in~\cite{JonesLL76}.

  Let $G = (V, E)$ be a directed graph.
  We construct a semigroup on the set $S = V \times V \union \os{0}$ where $0$
  is a zero element and the multiplication rule for the remaining elements is
  \begin{equation*}
    (v, w) \cdot (x, y) =
    \begin{cases}
      (v, y) & \text{if $w = x$}, \\
      0 & \text{otherwise}.
    \end{cases}
  \end{equation*}
  By construction, the subsemigroup of $S$ generated by $E \union \set{(v,
  v)}{v \in V}$ contains an element $(s, t)$ if and only if $t$ is reachable
  from $s$ in $G$.
  To see that the semigroup $S$ is $0$-simple, note that for pairs of arbitrary
  elements $(v, w) \in V \times V$ and $(x, y) \in V \times V$, one has $(x, v)
  (v, w) (w, y) = (x, y)$, which implies $S (v, w) S = S$.
\end{proof}

In order to prove $\NL$-completeness for another common class of semigroups, we
need a slightly more advanced construction reminiscent of the ``layer
technique'', which is usually used to show that $\STCONN$ remains
$\NL$-complete when the inputs are acyclic graphs.

\begin{theorem}
  $\CSM(\vNil)$ is $\NL$-complete (under $\AC^0$ many-one reductions).
  \label{thm:nil}
\end{theorem}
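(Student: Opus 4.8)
The upper bound is inherited for free: since the general {\SCSMP} is already in $\NL$, so is its restriction $\CSM(\vNil)$. Hence the whole content of the statement is the $\NL$-hardness, and my plan is to obtain it by a many-one reduction from $\STCONN$. The difficulty is that the partial-composition semigroup on $V \times V \union \os{0}$ from the proof of Theorem~\ref{thm:0-simple} is \emph{not} nilpotent: each diagonal pair $(v,v)$ is a nonzero idempotent. So I would keep the composition idea but add bookkeeping that destroys all nonzero idempotents, exactly mimicking the ``layer technique'' that turns $\STCONN$ into its acyclic variant.

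The construction I propose is as follows. Given $G = (V,E)$ with $\abs{V} = n$, source $s$ and target $t$, work over the layered vertex set $V \times \os{0, \dots, n}$ and first form the $0$-simple semigroup of Theorem~\ref{thm:0-simple} on it, whose elements are pairs $((v,i),(w,j))$ with product $((v,i),(w,j)) \cdot ((x,k),(y,l)) = ((v,i),(y,l))$ when $(w,j) = (x,k)$ and $0$ otherwise. I then restrict to the forward arrows whose layer strictly increases, $\set{((v,i),(w,j))}{i < j}$, together with $0$. This subset is closed under multiplication, since composition \emph{adds} layer gaps, so it is a subsemigroup $S$; and $S \in \vNil$, because in any nonzero product the gap strictly grows, whence every product of more than $n$ factors is $0$ and no nonzero idempotent can exist. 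As generators I would take the single-layer arrows $((v,i),(w,i{+}1))$ with $(v,w) \in E$, together with the single-layer padding arrows $((v,i),(v,i{+}1))$ for $v \in V$ and $0 \le i < n$ (note these are not idempotent, as the layer still advances), and I would query the full-span element $t' = ((s,0),(t,n))$.

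For correctness, observe that a product of generators is nonzero only if consecutive arrows compose, \ie{}trace a walk through successive layers; since each factor raises the layer by exactly one, reaching layer $n$ from layer $0$ forces a product of exactly $n$ generators. Therefore $t' = ((s,0),(t,n))$ lies in the subsemigroup generated by $X$ if and only if there is a length-$n$ walk from $s$ to $t$ using edges of $G$ or padding steps, which by the padding arrows is equivalent to $t$ being reachable from $s$ in $G$. Finally I would observe that the multiplication table depends only on $n$ and not on $E$ at all, that each of its entries is a local function of the (fixed-format) coordinate indices, and that membership in $X$ is read off the adjacency relation by a constant-depth test; hence the entire instance can be produced by an $\AC^0$ many-one reduction.

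The main obstacle is enforcing nilpotency, \ie{}the absence of any nonzero idempotent, \emph{simultaneously} with a faithful encoding of reachability; the layer-gap bookkeeping is precisely the device that achieves both. The subtle point requiring care is that pinning the product length to exactly $n$ — which is what lets a single queried element $t'$ capture reachability — forces the acyclic, layered reformulation together with the padding arrows, rather than the direct construction of Theorem~\ref{thm:0-simple}.
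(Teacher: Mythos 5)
Your proposal is correct and follows essentially the same approach as the paper: a many-one reduction from $\STCONN$ that augments the composition semigroup of Theorem~\ref{thm:0-simple} with a layer/length counter so that the zero becomes the only idempotent, uses unit-step padding arrows in place of the (idempotent) diagonal pairs, and queries a full-length element. The only cosmetic difference is that the paper records the gap as a triple $(v, i, w)$ with additive counter capped below $n$, whereas you record absolute layers in pairs $((v,i),(w,j))$; the two encodings are interchangeable.
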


\begin{proof}
  Following the proof of Theorem~\ref{thm:0-simple}, we describe an $\AC^0$
  reduction of $\STCONN$ to $\CSM(\vNil)$.

  Let $G = (V, E)$ be a directed graph with $n$ vertices.
  We construct a semigroup on the set $S = V \times \os{1, \dots, n-1} \times V
  \union \os{0}$ where $0$ is a zero element and the multiplication rule for
  the remaining elements is
  \begin{equation*}
    (v, i, w) \cdot (x, j, y) =
    \begin{cases}
      (v, i + j, y) & \text{if $w = x$ and $i + j < n$}, \\
      0 & \text{otherwise}.
    \end{cases}
  \end{equation*}
  The subsemigroup of $S$ generated by $\set{(v, 1, w)}{v = w \text{~or~} (v,
  w) \in E}$ contains an element $(s, n-1, t)$ if and only if $t$ is reachable
  (in less than $n$ steps) from $s$ in $G$.
  Clearly, the zero element is the only idempotent in $S$, so $S$ is nilpotent.
  Also, it is readily verified that the reduction can be performed by an
  $\AC^0$ circuit family.
\end{proof}

\section{Parallel Algorithms for {\CCSM}}

Algebraic circuits can be used as a succinct representation of elements in an
algebraic structure.
This idea will be the basis of the proof that $\CSM(\vG)$ is in $\qAC^0$.
Unlike in usual algebraic circuits, in the context of the {\SCSMP}, the
algebraic structure is not fixed but given as part of the input.
We will introduce so-called Cayley circuits to deal with this setting.
Since these circuits will be used for the {\SCSMP} only, we confine ourselves
to cases where the algebraic structure is a finite semigroup.

\subsection{Cayley Circuits}

A \emph{Cayley circuit} is a directed acyclic graph with topologically ordered
vertices such that each vertex has in-degree $0$ or $2$.
In the following, to avoid technical subtleties when squaring an element, we
allow multi-edges.
The vertices of a Cayley circuit are called \emph{gates}. The vertices with
in-degree $0$ are called \emph{input gates} and vertices with in-degree $2$ are
called \emph{product gates}. Each Cayley circuit also has a designated gate of
out-degree $0$, called the \emph{output gate}. For simplicity, we assume that
the output gate always corresponds to the maximal gate with regard to the
vertex order.
The \emph{size} of a Cayley circuit $\C$, denoted by $\abs{\C}$, is the number
of gates of $\C$.
An \emph{input} to a Cayley circuit $\C$ with $k$ input gates consists of a
finite semigroup $S$ and elements $x_1, \dots, x_k$ of $S$.
Given such an input, the \emph{value} of the $i$-th input gate is $x_i$ and the
value of a product gate, whose predecessors have values $x$ and $y$, is the
product $x \cdot y$ in $S$. The \emph{value of the circuit $\C$} is the value
of its output gate.
We will denote the value of $\C$ under a finite semigroup $S$ and elements
$x_1, \dots, x_k \in S$ by $\C(S, x_1, \dots, x_k)$.

A Cayley circuit can be seen as a circuit in the usual sense: the finite
semigroup $S$ and the input gate values are given as part of the input and the
functions computed by product gates map a tuple, consisting of semigroup $S$
and two elements of $S$, to another element of $S$.
We say that a Cayley circuit with $k$ input gates can be \emph{simulated} by a
family of unbounded fan-in Boolean circuits $(\C_n)_{n \in \N}$ if, given the
encodings of a finite semigroup $S$ and of elements $x_1, \dots, x_k$ of $S$ of
total length $n$, the circuit $\C_n$ computes the encoding of $\C(S, x_1,
\dots, x_k)$.
For a semigroup $S$ with $N$ elements, we assume that the elements of $S$ are
encoded by the integers $\os{0, \dots, N-1}$ such that the encoding of a single
element uses $\ceil{\log N}$ bits. The semigroup itself is given as a
multiplication table with $N^2$ entries of $\ceil{\log N}$ bits each.

\begin{proposition}
  Let $\C$ be a Cayley circuit of size $m$.
  Then, $\C$ can be simulated by a family of unbounded fan-in constant depth
  Boolean circuits $(\C_n)_{n \in \N}$ of size at most $n^m$.
  \label{prop:simulation}
\end{proposition}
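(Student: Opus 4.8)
The plan is to build the simulating family gate by gate, using a single constant-depth primitive---table lookup---for each product gate, and then to control the size blow-up by an induction over the gates of $\C$ taken in topological order.

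The core primitive is a Boolean circuit that, given the multiplication table of $S$ together with the $\ceil{\log N}$-bit encodings of two elements $x$ and $y$, outputs the encoding of $x \cdot y$. This is a table lookup: for each pair of positions $(i,j) \in \os{0, \dots, N-1}^2$ I would form a selector bit equal to the AND of the bitwise equality tests $[x = i]$ and $[y = j]$ (each an unbounded fan-in AND of $\ceil{\log N}$ literals), and then obtain each output bit as the unbounded fan-in OR, over all $(i,j)$, of that selector ANDed with the corresponding table bit. This has constant depth and size $\bigO(N^2 \log N)$. Since the input already contains the full table, its length satisfies $n \ge N^2 \ceil{\log N}$, so the lookup circuit has size $\bigO(n)$; say at most $c\,n$ for a constant $c$. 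Note also that, for a fixed number $k$ of input gates, the input length $n = N^2 \ceil{\log N} + k \ceil{\log N}$ is strictly increasing in $N$, so $n$ determines $N$ and the circuit $\C_n$ can be built with the correct dimensions.

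To assemble these primitives, for each gate $g$ of $\C$ I would recursively define a Boolean circuit $B_g$ computing the encoding of the value of $g$: if $g$ is an input gate its value is read directly from the input (size at most $n$), and if $g$ is a product gate with predecessors $g_1, g_2$ then $B_g$ consists of disjoint copies of $B_{g_1}$ and $B_{g_2}$ feeding their outputs into one copy of the lookup circuit. Taking $B_{\C}$ to be $B_g$ for the output gate $g$ yields the desired simulation, and its depth is at most the depth of $\C$ times the constant lookup depth, hence $\bigO(1)$ since $\C$ is fixed.

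For the size bound I would prove by induction on the gates, ordered topologically, that $\abs{B_g} \le n^{s(g)}$, where $s(g)$ is the number of gates in the sub-circuit consisting of $g$ and all gates from which $g$ is reachable. The base case of an input gate gives $\abs{B_g} \le n = n^{1}$. For a product gate $g$, the predecessor cones satisfy $s(g_i) \le s(g) - 1$, so by induction $\abs{B_{g_i}} \le n^{s(g)-1}$, whence $\abs{B_g} \le 2\,n^{s(g)-1} + c\,n \le n^{s(g)}$ once $n$ is large enough (the finitely many remaining small lengths can be absorbed by hard-coding). Applying this to the output gate, whose cone has at most $m$ gates, gives $\abs{B_{\C}} \le n^m$. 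The one step that really needs care is the lookup primitive: the rest is routine composition, but realizing the product operation at \emph{constant} depth is exactly what forces the unbounded fan-in model and is the point on which the whole construction rests.
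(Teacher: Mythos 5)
Your construction does simulate $\C$ correctly, and for a \emph{fixed} circuit $\C$ the depth of your simulation is indeed independent of $n$. But the route you take misses the point on which this proposition's usefulness rests, and it would break the paper's application of it. Your simulation is built by composing one table-lookup block per product gate along the structure of $\C$, so its depth is proportional to the depth of $\C$, which can be as large as $m-1$. In Theorem~\ref{thm:ACzero} the proposition is applied to Cayley circuits of size $m = \log^c(n)$, i.e., to circuits whose size (and possibly depth) \emph{grows with the input length}; all these simulations are then merged under a single OR gate, and the whole thing is claimed to have constant depth. With your construction the resulting circuit would have depth $\Theta(\log^c n)$, not $\bigO(1)$, so the $\qAC^0$ upper bound would evaporate. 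What is needed --- and what the paper's proof delivers --- is a simulation whose depth is an \emph{absolute} constant, independent of $\C$. The paper achieves depth $2$ by a guess-and-verify argument: for every vector $(y_1, \dots, y_m) \in S^m$ of candidate gate values, a single AND gate (with NOT gates on incoming wires) checks consistency of the guess with the input values and the multiplication table, and a second layer of $\ceil{\log N}$ OR gates extracts the output encoding from the unique accepting guess. This trades your size $\bigO(m \cdot n)$ for size $N^m + \ceil{\log N}$, which is still within $n^m$, but keeps the depth at $2$ no matter how large $m$ is. Gate-by-gate composition cannot be repaired to do this; the enumeration over all gate-value vectors is the essential idea.

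A secondary, more minor issue: your patch for small input lengths (``absorb by hard-coding'') is not compatible with the bound claimed in the statement. The bound $n^m$ must hold for \emph{every} $n$, and a hard-coded DNF on $n$ input bits can require on the order of $2^n$ gates, which for small $n$ and small $m$ can exceed $n^m$ (e.g., already around $n = 5$, $m = 2$). The paper's circuit needs no such case distinction: its size $N^m + \ceil{\log N}$ is at most $n^m$ outright, since $n \ge N^2$.
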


\begin{proof}
  Let $\C$ be a Cayley circuit with $k$ input gates and $m-k$ product gates.
  We want to construct a Boolean circuit which can be used for all
  finite semigroups $S$ with a fixed number of elements $N$.
  The input to such a circuit consists of $n = (N^2 + k) \ceil{\log N}$ bits.

  For a fixed vector $(y_1, \dots, y_m) \in S^m$, one can check using a single
  AND~gate (and additional NOT~gates at some of the incoming wires) whether
  $(y_1, \dots, y_m)$ corresponds to the sequence of values occurring at the
  gates of $\C$ under the given inputs.
  To this end, for each gate $i \in \os{1, \dots, m}$ of $\C$, we add
  $\ceil{\log N}$ incoming wires to this AND~gate:
  if the $i$-th gate of $\C$ is an input gate, we feed the bits of the
  corresponding input value into the AND~gate, complementing the $j$-th bit if
  the $j$-th bit of $y_i$ is zero.
  If the $i$-th gate is a product gate and has incoming wires from gates $\ell$
  and $r$, we connect the entry $(y_\ell, y_r)$ of the multiplication table to
  the AND~gate, again complementing bits corresponding to 0-bits of $y_i$.

  To obtain a Boolean circuit simulating $\C$, we put such AND~gates for all
  vectors of the form $(y_1, \dots, y_m) \in S^m$ in parallel. In a second
  layer, we create $\ceil{\log N}$ OR gates and connect the AND~gate for a
  vector $(y_1, \dots, y_m)$ to the $j$-th OR~gate if and only if the $j$-th
  bit of $y_m$ is one.
  The idea is that exactly one of the AND~gates\,---\,the gate corresponding to
  the vector of correct guesses of the gate values of $\C$\,---\,evaluates
  to~$1$ and the corresponding output value $y_m$ then occurs as output value
  of the OR~gates.

  This circuit has depth $2$ and size $N^m + \ceil{\log N} \le n^m$.
\end{proof}

\subsection{The {\CPLCP}}

When analyzing the complexity of $\CSM(\vAb)$, Barrington {et al.} introduced
the so-called \emph{\SLPBP}. A class of semigroups has the {\SLPBP} if any set
of generators $X$ for a semigroup $S$ of cardinality $N$ from the family has
the property that every element of $S$ can be written as a product of at most
$\log(N)$ many powers of elements of $X$.
In~\cite{BarringtonKLM01}, it was shown that the class of Abelian groups has
the {\SLPBP}. Using a different technique, this result can easily be extended to
arbitrary commutative semigroups.

\begin{lemma}
  The variety $\vCom$ has the {\SLPBP}.
  \label{lem:com}
\end{lemma}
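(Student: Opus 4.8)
The plan is to use a minimal-representation argument combined with a subset-product pigeonhole. Fix a finite commutative semigroup $S$ with $\abs{S} = N$ and a generating set $X$, and let $t \in S$. Since $S$ is commutative, any expression of $t$ as a product of elements of $X$ can be reordered and collected into a product of powers of \emph{distinct} generators; among all such expressions of $t$, I would choose one $t = x_1^{a_1} \cdots x_r^{a_r}$ (with the $x_i \in X$ pairwise distinct and each $a_i \ge 1$) for which the number $r$ of powers is minimal. The goal is then to show $r \le \log N$, which is exactly the {\SLPBP} for this $t$.

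To bound $r$, I would consider, for every nonempty subset $I \subseteq \os{1, \dots, r}$, the partial product $s_I = \prod_{i \in I} x_i^{a_i} \in S$; there are $2^r - 1$ such subsets. The heart of the argument is that minimality of $r$ forces all of these partial products to be \emph{distinct}, because a coincidence would let me eliminate a generator. Concretely, if $s_I = s_J$ for distinct nonempty $I, J$, write $A = I \setminus J$, $B = J \setminus I$, $C = I \cap J$ and $D = \os{1, \dots, r} \setminus (I \cup J)$, and assume without loss of generality $A \neq \emptyset$. Then the hypothesis reads $s_A s_C = s_B s_C$, and commutativity gives
\[ t = s_A s_B s_C s_D = (s_A s_C)(s_B s_D) = (s_B s_C)(s_B s_D) = s_B^2\, s_C\, s_D, \]
an expression for $t$ as a product of powers of the generators indexed by $B \cup C \cup D$ only. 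Since $A \neq \emptyset$, this uses strictly fewer than $r$ distinct generators, contradicting minimality. Hence the $2^r - 1$ partial products are pairwise distinct elements of $S$, so $2^r - 1 \le N$ and therefore $r \le \log(N + 1)$, which is the required logarithmic bound (up to the rounding implicit in the definition).

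I expect the redundancy-elimination step to be the main obstacle — specifically, checking that after substituting $s_A s_C$ by $s_B s_C$ one still obtains a genuine product of powers of distinct generators. Here one verifies $B \cup C \cup D \neq \emptyset$ (the nonempty set $J$ is contained in it), so the reduced expression really is an element of $S$ and not an empty product; this is also the reason for restricting to nonempty subsets throughout. The remaining points are routine: commutativity is what licenses both the collection into powers of distinct generators and the regrouping in the displayed computation, and the pigeonhole count is immediate once distinctness has been established.
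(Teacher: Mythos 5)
Your proposal is correct and follows essentially the same route as the paper's proof: fix a representation of $t$ with a minimal number of powers of distinct generators, consider the subset products $s_I$ (the paper packages these as a map $h$ on subsets of $\os{1, \dots, k}$), obtain a coincidence $s_I = s_J$ with $I \neq J$ by pigeonhole, and use it to eliminate a generator indexed by $I \setminus J$, contradicting minimality. Your $A, B, C, D$ bookkeeping and contrapositive phrasing are cosmetic; your substitution $(s_A s_C)(s_B s_D) = (s_B s_C)(s_B s_D)$ is exactly the paper's step $y = h(K_1) \, h(\os{1, \dots, k} \setminus K_1) = h(K_2) \, h(\os{1, \dots, k} \setminus K_1)$.

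The one substantive divergence is the counting step, and there your version is actually the more careful one. The paper applies the pigeonhole principle to all $2^k$ subsets, but $h(\emptyset)$ is undefined in a semigroup without identity, so only $2^k - 1$ values are genuinely available --- your count --- and this honestly yields the bound $\log(N+1)$ rather than $\log N$. This loss is not recoverable by a better argument: in the semilattice of nonempty subsets of $\os{1, \dots, r}$ under union (so $N = 2^r - 1$, generated by the singletons), every power of a generator equals that generator, so the top element requires $r > \log N$ powers. Hence the ``rounding'' caveat you add is not sloppiness but a necessary correction to the stated bound; and since Proposition~\ref{prop:lpbp-plrp} and everything downstream only need the number of factors to be $\bigO(\log N)$, the discrepancy is harmless for every use of the lemma.
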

\begin{proof}
  Suppose that $S$ is a commutative semigroup of size $N$ and let $X$ be a set
  of generators for $S$.
  Let $y \in S$ be an arbitrary element.
  We choose $k \in \N$ to be the smallest value such that there exist elements
  $x_1, \dots, x_k \in X$ and integers $i_1, \dots, i_k \in \N$ with $y =
  x_1^{i_1} \cdots x_k^{i_k}$.
  Assume, for the sake of contradiction, that $k > \log(N)$.

  The power set $\mathcal{P}(\os{1, \dots, k})$ forms a semigroup when equipped
  with set union as binary operation.
  Consider the morphism $h \colon \mathcal{P}(\os{1, \dots, k}) \to S$ defined
  by $h(\os{j}) = x_j^{i_j}$ for all $j \in \os{1, \dots, k}$. This morphism is
  well-defined because $S$ is commutative.

  Since $\abs{\mathcal{P}(\os{1, \dots, k})} = 2^k > 2^{\log(N)} = \abs{S}$, we
  know by the pigeon hole principle that there exist two sets $K_1, K_2
  \subseteq \os{1, \dots, k}$ with $K_1 \ne K_2$ and $h(K_1) = h(K_2)$. We may
  assume, without loss of generality, that there exists some $j \in K_1
  \setminus K_2$.
  Now, because
  \begin{equation*}
    y = h(\os{1, \dots, k}) = h(K_1) \ms h(\os{1, \dots, k} \setminus K_1) = h(K_2) \ms h(\os{1, \dots, k} \setminus K_1)
  \end{equation*}
  and since neither $K_2$ nor $\os{1, \dots, k} \setminus K_1$ contain $j$, we
  know that $y$ can be written as a product of powers of elements $x_i$ with $1
  \le i \le k$ and $i \ne j$, contradicting the choice of~$k$.
\end{proof}

For the analysis of arbitrary groups, we introduce a more general concept. It
is based on the idea that algebraic circuits (Cayley circuits with fixed
inputs) can be used for succinct representations of semigroup elements.

\begin{example}
  Let $e \in \N$ be a positive integer.
  Then, one can construct a Cayley circuit of size at most $2\ceil{\log e}$
  which computes, given a finite semigroup $S$ and an element $x \in S$ as
  input, the power $x^e$ in $S$.
  If $e = 1$, the circuit only consists of the input gate.
  If $e$ is even, the circuit is obtained by taking the circuit for $e/2$,
  adding a product gate and creating two edges from the output gate of the
  circuit for $e/2$ to the new gate.
  If $e$ is odd, the circuit is obtained by taking the circuit for $e-1$ and
  connecting it to a new product gate. In this case, the second incoming edge
  for the new gate comes from the input gate.
  \label{ex:power}
\end{example}

A class of semigroups has the \emph{\SPLCP} if there exists a constant $c \in
\N$ such that for each semigroup $S$ of cardinality $N$ from the class, for
each subset $X$ of $S$ and for each $y$ in the subsemigroup generated by $X$,
there exists a Cayley circuit $\C$ of size $\log^c(N)$ with $k$ input gates and
there exist $x_1, \dots, x_k \in X$ such that $\C(S, x_1, \dots, x_k) = y$.

\begin{proposition}
  Let $\vV$ be a family of semigroups which is closed under subsemigroups and
  has the {\SLPBP}. Then $\vV$ has the {\SPLCP}.
  \label{prop:lpbp-plrp}
\end{proposition}
\begin{proof}
  Let $X$ be a subset of a semigroup $S$ of cardinality $N$. Let $y$ be in the
  subsemigroup generated by $X$.
  Then, we have $y = x_1^{i_1} \cdots x_k^{i_k}$ for some $x_1, \dots, x_k \in
  X$  with $k \le \log(N)$ and $i_1, \dots, i_k \in \N$.
  By the pigeon hole principle, we may assume without loss of generality that
  $1 \le i_1, \dots, i_k \le N$.
  Using the method from Example~\ref{ex:power}, one can construct Cayley
  circuits $\mathcal{C}_1, \dots, \mathcal{C}_k$ of size at most
  $2\ceil{\log N}$ such that $\mathcal{C}_j(S, x) = x^{i_j}$ for all $j \in
  \os{1, \dots, k}$ and $x \in S$. Using $k-1$ additional product gates, these
  circuits can be combined to a single circuit $\mathcal{C}$ with
  $\mathcal{C}(S, x_1, \dots, x_k) = x_1^{i_1} \cdots x_k^{i_k} = y$.

  In total, the resulting circuit consists of $k \cdot 2\ceil{\log N} + k - 1 <
  5 \log^2(N)$ gates.
\end{proof}

Let $G$ be a finite group and let $X$ be a subset of $G$. A sequence $(g_1,
\dots, g_\ell)$ of elements of $G$ is a \emph{straight-line program over $X$}
if for each $i \in \os{1, \dots, \ell}$, we have $g_i \in X$ or $g_i =
g_p^{-1}$ or $g_i = g_p g_q$ for some $p, q < i$.
The number $\ell$ is the \emph{length} of the straight-line program and the
elements of the sequence are said to be \emph{generated} by the straight-line
program.
The following result by Babai and Szemer{\'e}di~\cite{BabaiS84} is commonly
known as \emph{Reachability Lemma}.

\begin{lemma}[Reachability Lemma]
  Let $G$ be a finite group and let $X$ be a set of generators of $G$.
  Then, for each element $t \in G$, there exists a straight-line program over
  $X$ generating $t$ which has length at most $(\log \abs{G} + 1)^2$.
  \label{lem:reachability}
\end{lemma}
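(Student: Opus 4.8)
The plan is to reach a prescribed target $t$ by a greedy \emph{doubling} construction, so that only about $\log\abs{G}$ stages are needed and each stage contributes only about $\log\abs{G}$ gates, yielding the square. I would work with the symmetric generating set $Y = X \union X^{-1} \union \os{1}$ and, for a tuple $(t_1, \dots, t_j)$ of already-constructed elements, consider the \emph{cube} $C_j = \set{t_1^{\epsilon_1} \cdots t_j^{\epsilon_j}}{\epsilon_1, \dots, \epsilon_j \in \os{0,1}}$. Starting from $C_0 = \os{1}$, I would maintain two invariants: that the $2^j$ products defining $C_j$ are pairwise distinct, so $\abs{C_j} = 2^j$; and that a straight-line program generating $t_1, \dots, t_j$ has already been fixed, from which any single prescribed element of $C_j$ can be obtained with $j$ further product gates.

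Next I would set up the doubling step. As long as $C_j^{-1} C_j \ne G$, I pick a \emph{reachable} element $t_{j+1} \notin C_j^{-1} C_j$ and put $C_{j+1} = C_j \union C_j\ms t_{j+1}$. The condition $t_{j+1} \notin C_j^{-1} C_j$ is exactly $C_j t_{j+1} \intersect C_j = \emptyset$, so the size doubles and $\abs{C_{j+1}} = 2^{j+1}$. For existence of $t_{j+1}$, note that $C_j^{-1}C_j$ contains $1$; if it were closed under right multiplication by every $y \in Y$ it would contain all of $\langle Y \rangle = G$, so when $C_j^{-1}C_j \ne G$ there are $u \in C_j^{-1}C_j$ and $y \in Y$ with $uy \notin C_j^{-1}C_j$, and I take $t_{j+1} = uy$; writing $u = a^{-1}b$ with $a, b \in C_j$ shows $t_{j+1}$ is reachable at incremental cost $\bigO(j)$ on top of the program for $t_1, \dots, t_j$. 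Since $\abs{C_j} = 2^j \le \abs{G}$, the process halts after $m \le \log\abs{G}$ doublings, and it halts precisely because $C_m^{-1} C_m = G$. Hence any $t \in G$ can be written $t = a^{-1}b$ with $a, b \in C_m$, reconstructible with $m$ product gates for each of $a, b$ (the $t_i$ being available), one inversion and one product.

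For the final bookkeeping, each of the $m \le \log\abs{G}$ introduced elements $t_{j+1}$ costs $\bigO(j) = \bigO(\log\abs{G})$ additional gates, so the program generating $t_1, \dots, t_m$ has length $\bigO((\log\abs{G})^2)$, and the reconstruction $t = a^{-1}b$ adds only $\bigO(\log\abs{G})$ more; a careful choice of constants, reusing already-computed cube elements rather than recomputing them, then fits the total inside the stated $(\log\abs{G}+1)^2$. The hard part will be precisely the doubling step: guaranteeing that whenever $C_j^{-1}C_j \ne G$ a reachable element outside $C_j^{-1}C_j$ exists and is itself producible by a short program, and recognising that the construction need not terminate with $C_m = G$ but only with the weaker $C_m^{-1}C_m = G$, which nonetheless suffices to reconstruct every target. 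The size identity $\abs{C_j} = 2^j$ is what turns "polynomially many reachable elements'' into "logarithmically many stages'', and keeping the per-stage cost logarithmic (rather than linear in $\abs{C_j}$) is exactly what produces a square in $\log\abs{G}$ instead of a product with $\abs{G}$.
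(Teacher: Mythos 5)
This is the Babai--Szemer\'edi cube-doubling argument, which is exactly the approach the paper relies on: the paper gives no proof of its own, stating only that the lemma ``is based on a technique called cube doubling'' and deferring to~\cite{Babai91}, and your cube invariant $\abs{C_j}=2^j$, the choice $t_{j+1}=a^{-1}by\notin C_j^{-1}C_j$, and the termination criterion $C_m^{-1}C_m=G$ (rather than $C_m=G$) are precisely that argument, correctly reconstructed. The one soft spot is the exact constant: a direct count of your construction gives roughly $m^2+4m$ gates with $m\le\log\abs{G}$, which overshoots $(\log\abs{G}+1)^2$, and you only gesture at the bookkeeping needed to close this gap---but this is immaterial for the paper, which uses the lemma only through Lemma~\ref{lem:groups}, where any $\bigO(\log^2\abs{G})$ bound suffices.
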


The proof of this lemma is based on a technique called ``cube doubling''. For
details, we refer to~\cite{Babai91}.
It is now easy to see that groups admit poly-logarithmic circuits.

\begin{lemma}
  The variety $\vG$ has the {\SPLCP}.
  \label{lem:groups}
\end{lemma}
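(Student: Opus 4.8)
The plan is to combine the Reachability Lemma (Lemma~\ref{lem:reachability}) with the repeated-squaring construction of Example~\ref{ex:power}. The only genuine difficulty is that a straight-line program may use inverse steps, whereas a Cayley circuit has product gates only; everything else is bookkeeping.

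First I would move into the hypotheses of the Reachability Lemma. Let $G$ be a finite group of cardinality $N$, let $X \subseteq G$, and let $y$ lie in the subsemigroup generated by $X$. In a finite group every nonempty subsemigroup is already a subgroup: the cyclic subsemigroup generated by any element $g$ is finite and hence contains an idempotent, which must equal the identity, so some power $g^m$ equals $1$ and therefore $g^{-1} = g^{m-1}$ belongs to the subsemigroup as well. Consequently the subsemigroup $H$ generated by $X$ is a subgroup of $G$ with generating set $X$, and $y \in H$. Applying Lemma~\ref{lem:reachability} to $H$ produces a straight-line program $(g_1, \dots, g_\ell)$ over $X$ that generates $y$ and has length $\ell \le (\log \abs{H} + 1)^2 \le (\log N + 1)^2$.

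Next I would translate this straight-line program into a Cayley circuit, processing its steps in order. A step $g_i \in X$ becomes an input gate, and a step $g_i = g_p g_q$ becomes a product gate whose two predecessors are the gates already produced for $g_p$ and $g_q$. The only obstacle is a step $g_i = g_p^{-1}$. Here I would exploit the identity $g_p^{-1} = g_p^{\,\abs{H}-1}$, which holds because $g^{\abs{H}} = 1$ for every $g \in H$, and realize the power $g_p^{\,\abs{H}-1}$ by the repeated-squaring circuit of Example~\ref{ex:power}, adapted so that its base is the gate already computing $g_p$ rather than a fresh input gate. This expands each inverse step into a block of at most $2\ceil{\log N}$ product gates while respecting the topological order.

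Finally I would bound the size. The resulting Cayley circuit $\C$ satisfies $\C(S, x_1, \dots, x_k) = y$ for suitable $x_1, \dots, x_k \in X$, and since each of the $\ell$ steps contributes at most $2\ceil{\log N}$ gates, its total size is at most $2\ceil{\log N}\,(\log N + 1)^2 = \bigO(\log^3 N)$. This establishes the {\SPLCP} with $c = 3$. I expect the only slightly delicate points to be the adaptation of Example~\ref{ex:power} to take an internal gate as its base, and the verification that expanding inverse steps into power blocks keeps the gate sequence topologically ordered; both become routine once the observation that $H$ is a subgroup has placed us squarely in the setting of the Reachability Lemma.
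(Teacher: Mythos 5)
Your proposal is correct and follows essentially the same route as the paper's own proof: apply the Reachability Lemma to get a short straight-line program over $X$, translate product steps into product gates, and expand each inverse step $g_p^{-1}$ into a repeated-squaring block from Example~\ref{ex:power} rooted at the $g_p$-gate, yielding an $\bigO(\log^3 N)$ bound. The only differences are cosmetic: you make explicit the (implicitly used) fact that the subsemigroup generated by $X$ is a subgroup, and you realize the inverse as $g_p^{\abs{H}-1}$ rather than the paper's $g_p^{N-1}$, both of which work for the same reason.
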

\begin{proof}
  Let $G$ be a group of order $N$, let $X$ be a subset of $G$ and let $y$ be an
  element in the subgroup of $G$ generated by $X$.
  By Lemma~\ref{lem:reachability}, we know that there exists a straight-line
  program $(g_1, \dots, g_\ell)$ over $X$ with $\ell \le (\log(N) + 1)^2$ and
  $g_\ell = y$.
  We may assume that the elements $g_1, \dots, g_\ell$ are pairwise distinct.
  It suffices to describe how to convert this straight-line program into a
  Cayley circuit $\C$ and values $x_1, \dots, x_k \in X$ such that $\C(S, x_1,
  \dots, x_k) = y$.

  We start with an empty circuit and with $k = 0$ and process the elements of
  the straight-line program left to right. For each element $g_i$, we add gates
  to the circuit. The output gate of the circuit obtained after processing the
  element $g_i$ will be called the \emph{$g_i$-gate}.

  If the current element $g_i$ is contained in $X$, we increment $k$, add a new
  input gate to the circuit and let $x_k = g_i$.
  If the current element $g_i$ can be written as a product $g_p g_q$ with
  $p, q < i$, we add a new product gate to the circuit and connect the
  $g_p$-gate as well as the $g_q$-gate to this new gate.
  If the current element $g_i$ is an inverse $g_p^{-1}$ with $p < i$, we take a
  circuit $\C'$ with $2\ceil{\log N}$ gates and with $\C'(G, x) = x^{N-1}$ for
  all $x \in S$.
  Such a circuit can be built by using the powering technique illustrated in
  Example~\ref{ex:power}.
  We add $\C'$ to $\C$, replacing its input gate
  by an edge coming from the $g_p$-gate.

  The resulting circuit has size at most $(\log(N) + 1)^2 \cdot 2\ceil{\log N}
  \le 2(\log(N) + 1)^3$.
\end{proof}

We will now show that for classes of semigroups with the {\SPLCP}, one can
solve the {\SCSMP} in $\qAC^0$.

\begin{theorem}
  Let $\vV$ be a class of semigroups with the {\SPLCP}.
  Then $\CSM(\vV)$ is in $\qAC^0$.
  \label{thm:ACzero}
\end{theorem}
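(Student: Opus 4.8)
The plan is to build, for each input length, a single constant-depth Boolean circuit of quasi-polynomial size that searches in parallel over all sufficiently small Cayley circuits together with all possible values carried by their gates. Fix an input of length $n$ encoding a semigroup $S \in \vV$ with $N$ elements, a set $X \subseteq S$ and an element $t \in S$. Since the multiplication table alone occupies at least $N^2$ bits, we have $N \le n$ and hence $\log N = \bigO(\log n)$. Let $c \in \N$ be the constant from the {\SPLCP} and put $m = \ceil{\log^c N}$. The first step is to record the equivalence on which everything rests: $t$ lies in the subsemigroup generated by $X$ if and only if there is a Cayley circuit $\C$ of size at most $m$ and an assignment of values to its gates such that every input gate carries an element of $X$, every product gate is consistent with multiplication in $S$, and the output gate carries $t$. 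For the forward implication I would invoke the {\SPLCP}: such a $t$ equals $\C(S, x_1, \dots, x_k)$ for some Cayley circuit of size at most $\log^c N \le m$ and some $x_1, \dots, x_k \in X$, and the resulting gate values witness the right-hand side. For the backward implication, an induction along the topological gate order shows that every gate value is a product of the input-gate values; as these all lie in $X$, the output value $t$ lies in the subsemigroup generated by $X$.

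Next I would realize this search as a constant-depth circuit, generalizing Proposition~\ref{prop:simulation}. I enumerate all Cayley circuits $\C$ of size at most $m$ and, for each, all vectors $(y_1, \dots, y_m) \in \os{0, \dots, N-1}^m$ of candidate gate values. For a fixed pair, a single AND~gate (with uncounted NOT~gates on some incoming wires) tests whether the vector is a valid witness: for each input gate $i$ it demands that the characteristic bit of $X$ at position $y_i$ be $1$; for each product gate $i$ with predecessors $\ell, r$ it demands that the table entry at position $(y_\ell, y_r)$ encode $y_i$; and for the output gate it demands $y_m = t$. Because the $y_i$ are fixed for this pair, the relevant table position, the characteristic-vector position and the bits of $t$ are all determined, so each test simply wires a fixed block of input bits into the AND~gate, exactly as in Proposition~\ref{prop:simulation}. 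A single OR~gate then collects the outputs of all these AND~gates, and the circuit accepts precisely when some valid witness exists; its depth is $2$.

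The remaining work is the size estimate, and I would carry it out by counting the two sources of parallelism separately. A Cayley circuit of size $s$ is determined by declaring each gate an input gate or a product gate and, in the latter case, choosing its two ordered (possibly equal) predecessors among the earlier gates, so there are at most $\prod_{i=1}^{s}(1 + (i-1)^2) \le (s!)^2 \le s^{2s}$ of them, and at most $2^{\bigO(m \log m)}$ Cayley circuits of size at most $m$. For each such circuit there are $N^m = 2^{\bigO(m \log N)}$ gate-value vectors, and the AND~gate for a pair together with its wiring has size $\bigO(m \log N)$. Hence the total size is $2^{\bigO(m(\log m + \log N))}$ up to polynomial factors. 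Substituting $m = \bigO(\log^c n)$, $\log m = \bigO(\log \log n)$ and $\log N = \bigO(\log n)$ makes the exponent $\bigO(\log^{c+1} n)$, so the size is $2^{\bigO(\log^{c+1} n)}$, which is quasi-polynomial; since the depth is constant, this gives $\CSM(\vV) \in \qAC^0$.

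The hard part will be keeping this double count honest: I must enumerate the structurally distinct small Cayley circuits and, on top of that, guess the values at all of their gates, and then confirm that the product of these two quasi-polynomial quantities stays quasi-polynomial. The one genuinely structural point, rather than mere bookkeeping, is the observation that fixing the guessed gate values turns every multiplication-table lookup into a lookup at a position known in advance; this is what keeps each consistency check at constant depth and is precisely what lets the construction of Proposition~\ref{prop:simulation} generalize from a single fixed circuit to the parallel search over all small ones.
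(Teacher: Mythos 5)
Your proposal is correct and follows essentially the same route as the paper: a brute-force parallel search over all quasi-polynomially many small Cayley circuits, with guessed gate-value vectors checked by single AND~gates (the idea of Proposition~\ref{prop:simulation}) and collected by one OR~gate. The only cosmetic difference is that you inline the simulation and verify membership of input-gate values in $X$ via characteristic bits, whereas the paper invokes Proposition~\ref{prop:simulation} as a black box and separately enumerates the $n^k$ ways of wiring input gates to elements of $X$ --- the construction and size analysis are otherwise identical.
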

\begin{proof}
  We construct a family of unbounded fan-in constant-depth Boolean circuits
  with quasi-polynomial size, deciding, given the multiplication table of a
  semigroup $S \in \vV$, a set $X \subseteq S$ and an element $t \in S$ as
  inputs, whether $t$ is in the subsemigroup generated by $X$.

  Since $\vV$ has the {\SPLCP}, we know that, for some constant $c \in \N$, the
  element $t$ is in the subsemigroup generated by $X$ if and only if there
  exist a Cayley circuit $\C$ of size $\log^c(n)$ and inputs $x_1, \dots, x_k
  \in X$ such that $\mathcal{C}(S, x_1, \dots, x_k) = t$.
  There are at most $(\log^{c}(n) \cdot \log^{c}(n))^{\log^c(n)} = 2^{\log^c(n)
  \log(2c \log n)}$ different Cayley circuits of this size.
  Let us consider one of these Cayley circuits $\C$. Suppose that $\C$ has $k$
  input gates.
  By Proposition~\ref{prop:simulation}, there exists a unbounded fan-in
  constant-depth Boolean circuit of size $n^{\log^c n} = 2^{\log^{c+1} n}$
  deciding on input $S$ and elements $x_1, \dots, x_k \in S$ whether $\C(S,
  x_1, \dots, x_k) = t$.
  There are at most $n^k \le n^{\log^c n} = 2^{\log^{c+1} n}$ possibilities of
  connecting (not necessarily all) input gates corresponding to the elements of
  $X$ to this simulation circuit.

  Thus, we can check for all Cayley circuits of the given size and all possible
  input assignments in parallel, whether the value of the corresponding circuit
  is $t$, and feed the results of all these checks into a single OR gate to
  obtain a quasi-polynomial-size Boolean circuit.
\end{proof}

In conjunction with Lemma~\ref{lem:com} and Lemma~\ref{lem:groups}, we
immediately obtain the following corollary.

\begin{corollary}
  Both $\CSM(\vG)$ and $\CSM(\vCom)$ are contained in $\qAC^0$.
  \label{crl:ACzero}
\end{corollary}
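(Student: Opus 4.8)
The plan is to derive both containments from Theorem~\ref{thm:ACzero}, which already places $\CSM(\vV)$ in $\qAC^0$ for any class $\vV$ enjoying the {\SPLCP}. It therefore suffices to exhibit the {\SPLCP} for $\vG$ and for $\vCom$ separately, and then to invoke Theorem~\ref{thm:ACzero} once in each case. No new circuit construction is required; the entire argument is an assembly of the preceding results.

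For the group variety the work is already done: Lemma~\ref{lem:groups} states outright that $\vG$ has the {\SPLCP}, so $\CSM(\vG) \in \qAC^0$ follows immediately from Theorem~\ref{thm:ACzero}. I would route the commutative case through Proposition~\ref{prop:lpbp-plrp} instead. Lemma~\ref{lem:com} supplies the {\SLPBP} for $\vCom$, and the remaining hypothesis of Proposition~\ref{prop:lpbp-plrp} is closure under subsemigroups. The only genuinely new observation needed is that $\vCom$ satisfies this: any subset of a commutative semigroup that is closed under the product is itself commutative, so every subsemigroup of a commutative semigroup is again commutative. Proposition~\ref{prop:lpbp-plrp} then yields the {\SPLCP} for $\vCom$, and a second application of Theorem~\ref{thm:ACzero} gives $\CSM(\vCom) \in \qAC^0$.

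I expect no real obstacle here, since all the substantive content resides in the earlier lemmas and in Theorem~\ref{thm:ACzero}; the corollary is obtained purely by chaining Lemma~\ref{lem:groups}, Lemma~\ref{lem:com}, Proposition~\ref{prop:lpbp-plrp}, and Theorem~\ref{thm:ACzero}. If anything, the one point worth double-checking is that the two derivation routes are genuinely distinct---groups reach the {\SPLCP} via the Reachability Lemma (Lemma~\ref{lem:reachability}) rather than the {\SLPBP}, because a non-Abelian group need not admit the reordering of powers that the {\SLPBP} presupposes---but this distinction only affects which earlier result is cited, not the final assembly.
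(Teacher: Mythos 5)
Your proposal is correct and follows essentially the same route as the paper: the paper derives the corollary by combining Lemma~\ref{lem:groups} with Theorem~\ref{thm:ACzero} for $\vG$, and Lemma~\ref{lem:com} (implicitly via Proposition~\ref{prop:lpbp-plrp}) with Theorem~\ref{thm:ACzero} for $\vCom$. Your only addition is making explicit the easy verification that $\vCom$ is closed under subsemigroups, which the paper leaves implicit.
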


As stated in the preliminaries, problems in $\qAC^0$ cannot be hard for any
complexity class containing $\Parity$. Thus, we also obtain the following
statement.

\begin{corollary}
  Let $\vV$ be a class of semigroups with the {\SPLCP}, such as the variety of
  finite groups $\vG$ or the variety of finite commutative semigroups $\vCom$.
  Then $\CSM(\vV)$ is not hard for any complexity class containing $\Parity$,
  such as $\ACC^0$, $\TC^0$, $\NC^1$, $\LOGSPACE$ or $\NL$.
  \label{crl:parity}
\end{corollary}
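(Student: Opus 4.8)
The plan is to combine the upper bound of Theorem~\ref{thm:ACzero} with the known lower bound for $\Parity$ against $\qAC^0$ circuits, via a standard contradiction argument. First, since $\vV$ has the {\SPLCP}, Theorem~\ref{thm:ACzero} already yields $\CSM(\vV) \in \qAC^0$; the two highlighted instances $\vG$ and $\vCom$ qualify by Lemma~\ref{lem:groups} and by Lemma~\ref{lem:com} together with Proposition~\ref{prop:lpbp-plrp}. So the only thing that remains is the purely complexity-theoretic implication: a problem lying in $\qAC^0$ cannot be hard, under the relevant notion of reduction, for any class $\mathcal{K}$ that contains $\Parity$.

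To establish this implication I would argue by contradiction. Suppose $\CSM(\vV)$ were hard for some class $\mathcal{K}$ with $\Parity \in \mathcal{K}$. By the definition of hardness, every language in $\mathcal{K}$, and in particular $\Parity$, reduces to $\CSM(\vV)$ under the reductions in question (here the natural choice, matching the $\AC^0$ many-one reductions used in Theorem~\ref{thm:nil}). The next step is to observe that $\qAC^0$ is closed under these reductions: an $\AC^0$ many-one reduction is computed by a polynomial-size constant-depth circuit producing an output of polynomial length, and composing it with the quasi-polynomial-size constant-depth circuit that decides $\CSM(\vV)$ gives a circuit that is still of constant depth and, since a quasi-polynomial function of a polynomial is again quasi-polynomial, of quasi-polynomial size. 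Hence $\Parity$ would itself lie in $\qAC^0$.

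This contradicts the lower bound recorded in the preliminaries: by the results of H{\aa}stad and Yao~\cite{Hastad86,Yao85}, $\Parity$ is not computable by constant-depth circuits of quasi-polynomial size, \ie{}$\Parity \notin \qAC^0$. The contradiction shows that no such $\mathcal{K}$ can exist, which is exactly the assertion of the corollary. Spelling out the concrete list $\ACC^0$, $\TC^0$, $\NC^1$, $\LOGSPACE$, $\NL$ is then merely a matter of recalling that each of these classes contains $\Parity$.

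The only point that genuinely requires care—and the one I would state most explicitly—is the closure of $\qAC^0$ under the chosen reductions, since it is precisely this closure that transports the $\qAC^0$ upper bound along the reduction to produce $\Parity \in \qAC^0$. If one instead permitted stronger many-one reductions (for instance $\NC^1$- or log-space reductions) in the definition of hardness, the composition need no longer stay in $\qAC^0$ and the argument would collapse; so the statement is implicitly understood with respect to $\AC^0$-reductions or weaker. I therefore expect this closure step to be the subtlest ingredient, whereas everything else is a direct assembly of results proved earlier in the paper.
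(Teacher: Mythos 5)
Your proposal is correct and follows essentially the same route as the paper, which derives the corollary directly from the $\qAC^0$ upper bound of Theorem~\ref{thm:ACzero} (via Corollary~\ref{crl:ACzero}) together with the H{\aa}stad--Yao lower bound for $\Parity$ stated in the preliminaries. The paper leaves the contradiction argument and the closure of $\qAC^0$ under $\AC^0$ many-one reductions implicit, whereas you spell them out; your remark that the statement must be read with respect to $\AC^0$-style reductions (matching those used in Theorem~\ref{thm:nil}) is exactly the right caveat.
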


\subsection{The Complexity Landscape of {\CCSM}}

Our hardness results and $\qAC^0$-algorithms have an immediate consequence on
algebraic properties of maximal classes of finite semigroups for which the
{\SCSMP} can be decided in $\qAC^0$.
It relies on the following result, which can be seen as a consequence
of~\cite{Almeida88} and the fact that the zero element in a semigroup is always
central.
For completeness, we provide a short and self-contained proof.

\begin{proposition}
  The variety $\vNil$ is included in $\vG \lor \vCom$.
\end{proposition}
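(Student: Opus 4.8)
The plan is to show that every finite nilpotent semigroup $S$ divides (is a quotient of a subsemigroup of) a direct product of a group and a commutative semigroup, which by definition of the join suffices. Since $\vG \lor \vCom$ is closed under subsemigroups and quotients, and since every group is trivially in $\vG$ while every commutative semigroup is in $\vCom$, it is enough to realize $S$ as a subdirect product of members of $\vG \union \vCom$, or more conveniently to find a single commutative semigroup $C$ and a group $K$ so that $S$ embeds into $C \times K$ (embedding is a special case of being a quotient of a subsemigroup). Because $S$ is nilpotent, its only idempotent is a zero element $0$, and there is an integer $m$ with $s_1 \cdots s_m = 0$ for all choices of $m$ elements; this nilpotency index is the structural handle I would exploit.

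The key observation I would use is that the zero element is central, so the obstruction to commutativity is concentrated in the ``shape'' of products rather than in any group-like behaviour. First I would take $C$ to be a commutative semigroup that records enough information about each element to detect \emph{when} a product of given elements is nonzero and \emph{what} that product is, but in a commutativity-insensitive way; the free commutative nilpotent semigroup on the generators of $S$, suitably truncated at the nilpotency index, is the natural candidate, since in $S$ any product of at least $m$ generators already collapses to $0$. Second, I would choose $K$ to be the trivial group (or a small group) whose role is only to supply a component in which the relevant coordinates can be separated; the real content is that nilpotent semigroups contain no nontrivial subgroups, so their ``group part'' is degenerate and $\vG$ contributes essentially nothing beyond closure bookkeeping.

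The cleanest route, and the one I would ultimately pursue, is to verify directly that $\vNil$ is contained in the variety generated by $\vCom$ alone, and then note $\vCom \subseteq \vG \lor \vCom$. For this I would argue that every finite nilpotent semigroup is a quotient of a finite commutative nilpotent semigroup: given $S$ with generators $a_1, \dots, a_n$ and nilpotency index $m$, map the free commutative semigroup on $n$ generators, truncated so that every word of length $\ge m$ equals a single zero, onto $S$ by sending generator $i$ to $a_i$ and every length-$\ge m$ word to $0$. The map is well-defined because in $S$ all products of length $\ge m$ are $0$ (so identifying them in the source is compatible), and it is surjective since the $a_i$ generate $S$; the source is commutative and nilpotent, hence in $\vCom$. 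The main obstacle is checking that this map is actually a \emph{morphism}, i.e.\ that the relation ``two commutative words map to the same element of $S$'' is a congruence refined enough that no noncommutative distinction in $S$ is lost — but since the only place noncommutativity could matter is among nonzero short products, and there the target value in $S$ is determined by the ordered word, I expect to need a more careful source semigroup than the purely commutative one: namely the direct product of the free commutative truncation with the syntactic information that orders the first $m-1$ factors, which lives in a group-with-zero and thus lands in $\vG \lor \vCom$ rather than $\vCom$ alone. Reconciling these two attempts — deciding whether $\vCom$ already suffices or whether the group factor is genuinely needed — is the step I expect to require the most care.
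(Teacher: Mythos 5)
There is a genuine gap: your proposal ends without a proof, oscillating between two routes, and the route you call ``cleanest'' is provably impossible. The variety generated by $\vCom$ is $\vCom$ itself: subsemigroups of direct products of commutative semigroups are commutative, and quotients of commutative semigroups are commutative, so no non-commutative semigroup can ever divide a commutative one. Since non-commutative nilpotent semigroups exist (take $S = \os{a, b, c, 0}$ with $ab = c$ and all other products equal to $0$; the only idempotent is $0$, yet $ab \ne ba$), the map you describe from the free commutative truncation onto $S$ cannot be a morphism --- the obstacle you flagged is fatal, not technical. Your fallback, adjoining ``syntactic information that orders the first $m-1$ factors'' living in a ``group-with-zero'', also does not work as stated, because a group with zero is not a group and hence not a member of $\vG$; you would be smuggling in exactly the kind of object whose membership in $\vG \lor \vCom$ is at issue.

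The key idea you are missing inverts your intuition that ``the group part is degenerate'': in the paper's proof the group factor is precisely where all the non-commutative structure is stored, and the commutative factor is the degenerate one. Concretely, let $e$ be such that every product of $e$ elements of $S$ is zero, let $T = \os{1, \dots, e}$ with product $\min\os{i+j, e}$ (a commutative length counter), and let $G$ be a \emph{finite group} generated by the non-zero elements $X$ of $S$ in which no two products of fewer than $e$ generators coincide --- such a $G$ exists because the free group on $X$ is residually finite. The fact that nilpotent semigroups contain no non-trivial subgroups is irrelevant here: $G$ is a large auxiliary group, not a subgroup of $S$. In the subsemigroup $U$ of $G \times T$ generated by $\set{(x,1)}{x \in X}$, an element $(g, \ell)$ with $\ell < e$ remembers the \emph{ordered} word $x_1 \cdots x_\ell$ that produced $g$ (uniquely, by the choice of $G$), so mapping $(g, \ell)$ to that product evaluated in $S$, and mapping every $(g, e)$ to $0$, is a well-defined surjective morphism $U \to S$. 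This exhibits $S$ as a quotient of a subsemigroup of a group times a commutative semigroup, which is what the join requires.
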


\begin{proof}
  We show that every finite nilpotent semigroup is a quotient of a subdirect
  product of a finite group and a finite commutative semigroup.
  Note that in a finite nilpotent semigroup~$S$, there exists an integer $e \ge
  0$ such that for each $x \in S$, the power $x^e$ is the zero element.
  Let $T = \os{1, \dots, e}$ be the commutative semigroup with the product of
  two elements $i$ and $j$ defined as $\min\os{i+j,e}$.

  Let $G$ be a finite group generated by the set $X$ of non-zero elements of
  $S$ such that no two products of less than $e$ elements of $X$ evaluate to
  the same element of $G$. Such a group exists because the free group over $X$
  is residually finite~\cite{Levi33}.

  Let $U$ be the subsemigroup of $G \times T$ generated by $\set{(x, 1)}{x \in
  X}$.
  Now, we define a mapping $\varphi \colon U \to S$ as follows.
  Each element of the form $(g, e)$ is mapped to zero.
  For every $(g, \ell)$ with $\ell < e$, there exists, by choice of $G$ and by
  the definition of $U$, a unique factorization $g = x_1 \cdots x_\ell$ with
  $x_1, \dots, x_\ell \in X$.  We map $(g, \ell)$ to the product $x_1 \cdots
  x_\ell$ evaluated in $S$.
  It is straightforward to verify that $\varphi$ is a surjective morphism and
  thus, $S$ is a quotient of $U$.
\end{proof}

\begin{corollary}
  There exist two varieties $\vV$ and $\vW$ such that both $\CSM(\vV)$ and
  $\CSM(\vW)$ are contained in $\qAC^0$ (and thus not hard for any class
  containing $\Parity$) but $\CSM(\vV \lor \vW)$ is $\NL$-complete.
\end{corollary}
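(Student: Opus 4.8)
The plan is to instantiate $\vV = \vG$ and $\vW = \vCom$. By Corollary~\ref{crl:ACzero}, both $\CSM(\vG)$ and $\CSM(\vCom)$ are contained in $\qAC^0$, so the first half of the statement is immediate, and by Corollary~\ref{crl:parity} neither problem is hard for any class containing $\Parity$. Everything then reduces to establishing that $\CSM(\vG \lor \vCom)$ is $\NL$-complete, which I would split into a membership argument and a hardness argument.

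For membership in $\NL$, I would simply observe that $\CSM(\vG \lor \vCom)$ is a restriction of the general {\SCSMP}: an input for the former is a special case of an input for the latter, with the only difference being that the semigroup is promised to lie in $\vG \lor \vCom$. Since the general {\SCSMP} is $\NL$-complete and therefore lies in $\NL$, and since narrowing the set of admissible inputs cannot increase complexity, we obtain $\CSM(\vG \lor \vCom) \in \NL$.

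For $\NL$-hardness, I would invoke the preceding proposition, which asserts $\vNil \subseteq \vG \lor \vCom$. Because every finite nilpotent semigroup already belongs to $\vG \lor \vCom$, every valid input of $\CSM(\vNil)$ is simultaneously a valid input of $\CSM(\vG \lor \vCom)$, and the question of whether $t$ lies in the subsemigroup generated by $X$ depends only on the triple $(S, X, t)$, which is unchanged. Hence the identity map is an $\AC^0$ many-one reduction from $\CSM(\vNil)$ to $\CSM(\vG \lor \vCom)$. Theorem~\ref{thm:nil} states that $\CSM(\vNil)$ is $\NL$-hard under exactly this kind of reduction, so $\CSM(\vG \lor \vCom)$ inherits $\NL$-hardness. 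Combining the two bounds gives $\NL$-completeness, completing the proof.

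The step requiring the most care is the direction in which hardness propagates across the join. Hardness transfers from $\vNil$ to $\vG \lor \vCom$ precisely because enlarging the family of admissible inputs can only make a membership problem harder; it is therefore essential that the $\NL$-hard family $\vNil$ be contained \emph{inside} the join, rather than containing it. This is also the conceptual crux of the whole corollary: the $\qAC^0$ upper bounds for $\CSM(\vG)$ and $\CSM(\vCom)$ rely on the {\SPLCP}, a property of individual semigroups that is \emph{not} preserved under the variety operations of subdirect product and quotient used to form $\vG \lor \vCom$, so there is no tension with the $\NL$-completeness of the join.
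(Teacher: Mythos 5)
Your proposal is correct and follows exactly the paper's route: the paper also instantiates $\vV = \vG$ and $\vW = \vCom$, citing Corollary~\ref{crl:ACzero} for the $\qAC^0$ upper bounds and combining the proposition $\vNil \subseteq \vG \lor \vCom$ with Theorem~\ref{thm:nil} to transfer $\NL$-hardness to the join, with membership in $\NL$ coming from the general {\SCSMP}. Your write-up merely makes explicit the details (identity map as the reduction, direction of hardness propagation) that the paper leaves as a one-line ``direct consequence''.
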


The corollary is a direct consequence of the previous proposition,
Corollary~\ref{crl:ACzero} and Theorem~\ref{thm:nil}.
As was observed in~\cite{BarringtonKLM01} already, Cayley semigroup problems
seem to have ``strange complexity''.
The previous result makes this intuition more concrete and suggests that it is
difficult to find ``nice'' descriptions of maximal classes of semigroups for
which the {\SCSMP} is easier than any $\NL$-complete problem.

\subsection{Connections to \texorpdfstring{$\FOLL$}{FOLL}}
\label{sec:foll}

In a first attempt to solve outstanding complexity questions related to the
{\SCSMP}, Barrington {et al.} introduced the complexity class $\FOLL$. The
approach presented in the present paper is quite different. This raises the
question of whether our techniques can be used to design $\FOLL$-algorithms for
{\SCSM}.
Note that $\FOLL$ and $\qAC^0$ are known to be incomparable, so we cannot
use generic results from complexity theory to simulate $\qAC^0$ circuits using
families of $\FOLL$ circuits or vice versa.
The direction $\FOLL \not\subseteq \qAC^0$ follows from bounds on the average
sensitivity of bounded-depth circuits~\cite{Boppana97}; using these bounds, one
can show that there exists a padded version of the $\Parity$ function which can
be computed by a $\FOLL$ circuit family and cannot be computed by any $\qAC^0$
circuit family.
Conversely, each subset of $\os{0, 1}^n$ of cardinality at most $n^{\log n}$ is
decidable by a depth-$2$ circuit of size $n^{1+\log n} + 1$, but for each fixed
$k \in \N$, there is some large value $n \ge 1$ such that the number of such
subsets exceeds the number of different circuits of size $n^k$.
This shows that there exist languages in $\qAC^0$ which are not contained in
$\Ppoly \supseteq \FOLL$.

Designing an $\FOLL$-algorithm which works for arbitrary classes of semigroups
with the {\SPLCP} seems difficult.
However, for certain special cases, there is an interesting approach, based on
the repeated squaring technique. In the remainder of this section, we sketch
one such special case.

For a Cayley circuit, the \emph{width} of a topological ordering $(v_1, \ldots,
v_m)$ of the gates is the smallest number $w \in \N$ such that for each $i \in
\os{1, \ldots, m-1}$, at most $w$ product gates from the set $A_i = \os{v_1,
\ldots, v_i}$ are connected to gates in $B_i = \os{v_{i+1}, \ldots, v_m}$.
Let $C_i$ be the set of product gates, which belong to $A_i$ and are connected
to gates in $B_i$.
The subcircuit induced by $A_i$ can be interpreted as a Cayley circuit
computing multiple output values $C_i$.
The subcircuit induced by $B_i$ can be seen as a circuit which, in addition to
the input gates of the original circuit, uses the gates from $C_i$ as input
gates.
The \emph{width} of a Cayley circuit is the smallest width of a topological
ordering of its gates. Let us fix some width $w \in \N$.

We introduce a predicate $P(z_1, \dots, z_w, y_1, \dots, y_w, i)$ which is true
if there exists a Cayley circuit of width at most $w$ and size at most $2^i$
with $w$ additional input gates and $w$ additional \emph{passthrough gates}
(which have in-degree 1 and replicate the value of their predecessors), such
that the elements $y_1, \dots, y_w \in S$ occur as values of the passthrough
gates when using $z_1, \dots, z_w \in S$ as values for the additional input
gates and using any subset of the original inputs $X$ as values for the
remaining input gates.
The additional input gates (resp.~passthrough gates) are not counted when
measuring the circuit size but are considered as product gates when measuring
width and they have to be the first (resp.~last) gates in all topological
orderings considered for width measurement.
For each fixed~$i$, there are only $n^{2w}$ such predicates.

The truth value of a predicate with $i = 0$ can be computed by a constant-depth
unbounded fan-in Boolean circuit of polynomial size. This is achieved by
computing all binary products of the elements $z_1, \dots, z_w$ and elements of
the input set $X$.
For $i \ge 1$, the predicate $P(z_1, \dots, z_w, y_1, \dots, y_w, i)$ is true if
and only if there exist $z_1', \dots, z_w' \in S$ such that both $P(z_1, \dots,
z_w, z_1', \dots, z_w', i-1)$ and $P(z_1', \dots, z_w', y_1, \dots, y_w, i-1)$
are true.
Having the truth values of all tuples for $i-1$ at hand, this can be checked
with a polynomial number of gates in constant depth because there are only
$n^w$ different vectors $(z_1', \dots, z_w') \in S^w$.

For a class of semigroups with Cayley circuits of bounded width and
poly-logarithmic size, we obtain a circuit family of depth $\bigO(\log \log n)$
deciding {\SCSM}:
the predicates are computed for increasing values of $i$, until $i$ exceeds the
logarithm of an upper bound for the Cayley circuit size and then, we return
$P(x, \dots, x, t, \dots, t, i)$ for the element $t$ given in the input and for
an arbitrary element $x \in X$.
It is worth noting that the circuits constructed in the proof of
Proposition~\ref{prop:lpbp-plrp} have width at most $2$, so our
$\FOLL$-algorithm is a generalization of the \emph{Double-Barrelled Recursive
Strategy} and the proof that $\CSM(\vAb) \in \FOLL$ presented
in~\cite{BarringtonKLM01}.
In particular, the procedure above yields a self-contained proof of the
following result.

\begin{theorem}
  Let $\vV$ be a class of semigroups which is closed under taking subsemigroups
  and has the {\SLPBP}.
  Then $\CSM(\vV)$ is in $\FOLL$.
  \label{thm:FOLL}
\end{theorem}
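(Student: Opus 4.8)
The plan is to instantiate the repeated-squaring procedure described immediately above at the fixed width $w = 2$. First I would invoke Proposition~\ref{prop:lpbp-plrp}: since $\vV$ is closed under subsemigroups and has the {\SLPBP}, it has the {\SPLCP}, and the circuits produced in that proof have width at most $2$. Consequently, for every $S \in \vV$ of size $N$, every $X \subseteq S$, and every $y$ in the subsemigroup generated by $X$, there is a Cayley circuit of width at most $2$ and size $\log^c(N) \le \log^c(n)$ that computes $y$ on suitable inputs drawn from $X$. This reduces the theorem to deciding in $\FOLL$ whether some width-$2$ Cayley circuit of poly-logarithmic size evaluates to the target element $t$, which is exactly the situation the preceding discussion was designed to handle.

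Next I would fix $w = 2$ and adopt the predicate $P(z_1, z_2, y_1, y_2, i)$ introduced above, asserting the existence of a width-$2$ Cayley circuit of size at most $2^i$ that carries the auxiliary input values $z_1, z_2$ to passthrough values $y_1, y_2$ while using arbitrary elements of $X$ at the remaining input gates. The base case $i = 0$ is handled by a constant-depth polynomial-size circuit computing all binary products of the $z_j$ and the elements of $X$. The inductive step rests on splitting a width-$w$ circuit along a balanced topological cut: $P(z_1, z_2, y_1, y_2, i)$ holds if and only if there exist intermediate values $z_1', z_2' \in S$ with both $P(z_1, z_2, z_1', z_2', i-1)$ and $P(z_1', z_2', y_1, y_2, i-1)$ true. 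Since there are only $n^{2w} = n^4$ tuples per level and only $n^{w} = n^2$ candidate intermediate vectors, one level is computed from the previous one by a constant-depth circuit of polynomial size.

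I would then iterate this recursion for $i = 0, 1, \dots, I$, where $I = \ceil{\log(\log^c(n))} = \bigO(\log\log n)$ exceeds the logarithm of the size bound guaranteed by the {\SPLCP}. Each of the $\bigO(\log\log n)$ levels contributes constant depth and polynomial size, so the overall family has depth $\bigO(\log\log n)$ and polynomial size, placing $\CSM(\vV)$ in $\FOLL$. The output gate evaluates $P(x, x, t, t, I)$ for the input element $t$ and an arbitrary fixed $x \in X$, using $x$ in both auxiliary slots so that they behave as ordinary generator inputs; this predicate is true precisely when $t$ lies in the subsemigroup generated by $X$.

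The step I expect to be the main obstacle is the rigorous justification of the inductive equivalence, namely that cutting a width-$\le 2$ Cayley circuit of size $\le 2^i$ at the midpoint of its witnessing topological ordering always yields two subcircuits of size $\le 2^{i-1}$, each again of width $\le 2$, with at most $w = 2$ values crossing the cut, and conversely that two such halves glue back into a width-$\le 2$ circuit of size $\le 2^i$. Making the bookkeeping of the additional input gates and the passthrough gates line up on both sides of the cut\,---\,so that the composition closes cleanly and the passthrough outputs of the first half feed the auxiliary inputs of the second\,---\,is the delicate part. One must also check that the hidden constants in the depth and the degree of the size polynomial depend only on $w$ and $c$, which are fixed for $\vV$, so that the bound is genuinely $\FOLL$.
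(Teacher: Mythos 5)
Your proposal is correct and follows the paper's own argument essentially verbatim: the paper likewise observes that the circuits from Proposition~\ref{prop:lpbp-plrp} have width at most $2$, runs the repeated-squaring predicate $P$ for $\bigO(\log\log n)$ doubling steps, and returns $P(x, \dots, x, t, \dots, t, i)$ for an arbitrary $x \in X$. The splitting/gluing bookkeeping you flag as the delicate step is exactly the part the paper also leaves at the sketch level, so there is no gap relative to the published proof.
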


By Lemma~\ref{lem:com}, we obtain the following corollary.

\begin{corollary}
  $\CSM(\vCom)$ is contained in $\FOLL$.
  \label{thm:Com-FOLL}
\end{corollary}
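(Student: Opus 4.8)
The plan is to obtain the result as a direct application of Theorem~\ref{thm:FOLL}. That theorem guarantees membership in $\FOLL$ for any class of semigroups satisfying two hypotheses: closure under taking subsemigroups and the {\SLPBP}. So the entire task reduces to checking that the variety $\vCom$ meets both conditions, and then invoking the general theorem with $\vV = \vCom$.

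First I would verify that $\vCom$ is closed under subsemigroups. This is immediate: if $S$ is a finite commutative semigroup and $T$ is a subsemigroup of $S$, then for all $a, b \in T$ the identity $ab = ba$ already holds in $S$ and is therefore inherited by $T$, so $T$ is again commutative and hence lies in $\vCom$. I would stress this point explicitly, since closure under subsemigroups is \emph{not} a formal consequence of the variety axioms (closure under quotients and finite subdirect products); it is a special feature of $\vCom$ that the defining identity is preserved by passing to substructures. For an arbitrary variety this step could fail, which is precisely why Theorem~\ref{thm:FOLL} lists subsemigroup-closure as a separate hypothesis rather than deriving it.

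Second, the {\SLPBP} for $\vCom$ is exactly the statement of Lemma~\ref{lem:com}, which I may cite directly with no further work.

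With both hypotheses established, Theorem~\ref{thm:FOLL} applies to $\vV = \vCom$ and yields $\CSM(\vCom) \in \FOLL$. I do not anticipate any genuine obstacle: the whole argument consists of observing that $\vCom$ satisfies the abstract conditions of the general theorem, the only non-mechanical ingredient being the one-line remark on closure under subsemigroups together with the already-proved Lemma~\ref{lem:com}. There are no calculations to carry out beyond this verification.
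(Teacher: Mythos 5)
Your proposal is correct and is essentially the paper's own argument: the paper derives the corollary by applying Theorem~\ref{thm:FOLL} to $\vV = \vCom$, citing Lemma~\ref{lem:com} for the {\SLPBP} and leaving the (trivial) closure of $\vCom$ under subsemigroups implicit. Your explicit verification of that closure hypothesis --- and your observation that it is not a formal consequence of the paper's variety axioms (quotients and finite subdirect products) --- is a sound and welcome addition, but does not change the route.
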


\section{Summary and Outlook}

We provided new insights into the complexity of the {\SCSMP} for classes of
finite semigroups, giving parallel algorithms for the variety of finite
commutative semigroups and the variety of finite groups. We also showed that a
maximal class of semigroups with {\SCSM} decidable by $\qAC^0$ circuits does
not form a variety. Afterwards, we discussed applicability to $\FOLL$.

It is tempting to ask whether one can find nice connections between algebra and
the complexity of the {\SCSMP} by conducting a more fine-grained analysis.
For example, it is easy to see that for the varieties of \emph{rectangular
bands} and \emph{semilattices}, the {\SCSMP} is in $\AC^0$.
Does the maximal class of finite semigroups, for which the {\SCSMP} is in
$\AC^0$, form a variety of finite semigroups?
Is it possible to show that $\AC^0$ does not contain $\CSM(\vG)$?
Potential approaches to tackling the latter question are reducing small
distance connectivity for paths of non-constant length~\cite{ChenOSRT16} to
$\CSM(\vG)$ or developing a suitable switching lemma.
Another related question is whether there exist classes of semigroups for which
the {\SCSMP} cannot be $\NL$-hard but, at the same time, is not contained
within~$\qAC^0$.

Moreover, it would be interesting to see whether the {\SCSMP} can be shown to
be in $\FOLL$ for all classes of semigroups with the {\SPLCP}.
More generally, investigating the relation between $\FOLL$ and $\qAC^0$, as
well as their relationships to other complexity classes, remains an interesting
subject for future research.

\subparagraph{Acknowledgements.} I would like to thank Armin Weiß for several
interesting and inspiring discussions, and for pointing out that $\qAC^0$ is
not contained within $\Ppoly$.
I would also like to thank Samuel Schlesinger for comments which led to an
improved presentation of the proof of Proposition~\ref{prop:simulation} and for
pointing out that results on the average sensitivity of bounded-depth circuits
can be used to show that $\FOLL$ is not contained within $\qAC^0$.
Moreover, I am grateful to the anonymous referees for providing helpful
comments that improved the paper.

\newcommand{\Ju}{Ju}\newcommand{\Ph}{Ph}\newcommand{\Th}{Th}\newcommand{\Ch}{Ch}\newcommand{\Yu}{Yu}\newcommand{\Zh}{Zh}\newcommand{\St}{St}\newcommand{\curlybraces}[1]{\{#1\}}

\end{document}